\newtheorem{theorem}{Theorem}
\newtheorem{lemma}{Lemma}
\theoremstyle{remark}
\newtheorem{definition}{Definition}
\newcommand{\Prefix}{\mathsf{Prefix}}
\newcommand{\SHT}{\mathsf{SHT}}
\newcommand{\PPH}{\mathsf{PPH}}
\newcommand{\id}{\mathsf{id}}
\newcommand{\rslink}{\mathsf{rsl}}
\newcommand{\mrp}{\mathsf{mrp}}
\newcommand{\occ}{\mathit{occ}}
\newcommand{\pocc}{\mathit{pocc}}
\newcommand{\inputtrie}{\mathcal{T}}
\newcommand{\CST}{\mathsf{CST}}
\newcommand{\pCST}{\mathsf{pCST}(\mathcal{T})}
\newcommand{\pstrset}{\mathcal{W}_{\mathcal{T}}}
\newcommand{\spe}{\mathsf{spe}}
\newcommand{\pcs}{\mathsf{pcs}}
\newcommand{\cnode}[1]{\mathit{c}_{#1}}
\newcommand{\pnode}[1]{\mathit{h}_{#1}}
\begin{document}

\title{
  The Parameterized Position Heap of a Trie\thanks{YN, SI, HB, MT are respectively supported by JSPS KAKENHI Grant Numbers JP18K18002, JP17H01697, JP16H02783, JP18H04098.}
}

\date{}

\author{Noriki~Fujisato}
\author{Yuto~Nakashima}
\author{Shunsuke~Inenaga}
\author{Hideo~Bannai}
\author{Masayuki~Takeda}

\affil{Department of Informatics, Kyushu University, Japan\\
  \texttt{\{noriki.fujisato, yuto.nakashima, inenaga, bannai, takeda\}@inf.kyushu-u.ac.jp}}

\maketitle

\begin{abstract}
  Let $\Sigma$ and $\Pi$ be disjoint alphabets of respective size $\sigma$ and $\pi$.
  Two strings over $\Sigma \cup \Pi$
  of equal length are said to \emph{parameterized match} (\emph{p-match})
	if there is a bijection $f:\Sigma \cup \Pi \rightarrow \Sigma \cup \Pi$
  such that (1) $f$ is identity on $\Sigma$ and
  (2) $f$ maps the characters of one string to those
	of the other string so that the two strings become identical.
  We consider the p-matching problem
  on a (reversed) trie $\mathcal{T}$ and a string pattern $P$
  such that every path that p-matches $P$ has to be reported.
  Let $N$ be the size of the given trie $\mathcal{T}$.
  In this paper, we propose the \emph{parameterized position heap} for $\mathcal{T}$
  that occupies $O(N)$ space and 
  supports p-matching queries in $O(m \log (\sigma + \pi) + m \pi + \pocc))$ time,
	where $m$ is the length of a query pattern $P$
	and $\pocc$ is the number of paths in $\mathcal{T}$ to report.
  We also present an algorithm which constructs
  the parameterized position heap for a given trie $\mathcal{T}$ 
  in $O(N (\sigma + \pi))$ time and working space.
\end{abstract}

\section{Introduction} \label{sec:intro}

The \emph{parameterized matching problem} (\emph{p-matching problem}),
first introduced by Baker~\cite{Baker96},
is a variant of pattern matching which looks for
substrings of a text that has ``the same structure'' as a given pattern.
More formally, we consider a parameterized string (p-string)
that can contain static characters from an alphabet $\Sigma$ and
parameter characters from another alphabet $\Pi$.
Two equal length p-strings $x$ and $y$ over the alphabet $\Sigma \cup \Pi$
are said to \emph{parameterized match} (\emph{p-match})
if $x$ can be transformed to $y$ (and vice versa)
by applying a bijection which renames the parameter characters.
The \emph{p-matching problem} is, given a text p-string $w$ and pattern p-string $p$,
to report the occurrences of substrings of $w$ that p-match $p$.
Studying the p-matching problem is well motivated by
plagiarism detection, software maintenance,
and RNA structural pattern matching~\cite{Baker96,Shibuya04}.
We refer readers to~\cite{MendivelsoP15} for detailed descriptions
about these motivations.

Baker~\cite{Baker96} proposed an indexing data structure for
the p-matching problem, called the \emph{parameterized suffix tree} (\emph{p-suffix tree}).
The p-suffix tree supports p-matching queries in $O(m \log(\sigma + \pi) + \pocc)$ time,
where $m$ is the length of pattern $p$,
$\sigma$ and $\pi$ are respectively the sizes of the alphabets $\Sigma$ and $\Pi$,
and $\pocc$ is the number of occurrences to report~\cite{Baker93}.
She also showed an algorithm that builds the p-suffix tree
for a given text $S$ of length $n$ in $O(n(\pi + \log \sigma))$ time
with $O(n)$ space~\cite{Baker96}.
Later, Kosaraju~\cite{Kosaraju95} proposed an algorithm to build the p-suffix tree
in $O(n \log(\sigma + \pi))$ time\footnote{The original claimed time bounds in Kosaraju~\cite{Kosaraju95} and in Shibuya~\cite{Shibuya04} are $O(n (\log \sigma + \log \pi))$. However, assuming by symmetry that $\sigma \geq \pi$, we have $\log \sigma + \log \pi = \log (\sigma \pi) \leq \log\sigma^2 = 2 \log \sigma = O(\log \sigma)$ and $\log (\sigma + \pi) \leq \log (2\sigma) = \log2 + \log \sigma = O(\log \sigma)$.} with $O(n)$ space.
Their algorithms are both based on McCreight's suffix tree construction algorithm~\cite{McC76},
and hence are \emph{offline} (namely, the whole text has to be known beforehand).
Shibuya~\cite{Shibuya04} gave an \emph{left-to-right online} algorithm that builds
the p-suffix tree in $O(n \log(\sigma + \pi))$ time with $O(n)$ space.
His algorithm is based on Ukkonen's suffix tree construction algorithm~\cite{Ukk95}
which scans the input text from left to right.

Diptarama et al.~\cite{DiptaramaKONS17} proposed a new indexing structure
called the \emph{parameterized position heap} (\emph{p-position heap}).
They showed how to construct the p-position heap of a given p-string $S$ of length $n$
in $O(n \log(\sigma + \pi))$ time with $O(n)$ space
in a \emph{left-to-right online} manner.
Their algorithm is based on Kucherov's position heap construction algorithm~\cite{Kucherov13}
which scans the input text from left to right.
Recently, Fujisato et al.~\cite{FujisatoNIBT18} presented another variant
of the p-position heap that can be constructed in a \emph{right-to-left online} manner,
in $O(n \log(\sigma + \pi))$ time with $O(n)$ space.
This algorithm is based on Ehrenfeucht et al.'s algorithm~\cite{ehrenfeucht_position_heaps_2011}
which scans the input text from right to left.
Both versions of p-positions heaps support
p-matching queries in $O(m \log(\sigma + \pi) + m\pi + \pocc)$ time.

This paper deals with indexing on \emph{multiple} texts;
in particular, we consider the case where those multiple texts are represented by a \emph{trie}.
It should be noted that our trie is a so-called \emph{common suffix trie} (\emph{CS trie})
where the common suffixes of the texts are merged
and the edges are reversed (namely, each text is represented by a path from a leaf to the root).
See also Figure~\ref{fig:cst} for an example of a CS trie.
There are two merits in representing multiple texts by a CS trie:
Let $N$ be the size of the CS trie of the multiple strings of total length $Z$.
(1)
$N$ can be as small as $\Theta(\sqrt{Z})$
when the multiple texts share a lot of common long suffixes.
(2) The number of distinct suffixes of the texts
is equal to the number of the nodes in the CS trie, namely $N$.
On the other hand, this is not the case with the ordinal common prefix trie (CP trie),
namely, the number of distinct suffixes in the CP trie
can be super-linear in the number of its nodes.
Since most, if not all, indexing structures require space
that is dependent of the number of distinct suffixes,
the CS trie is a more space economical representation for indexing than its CP trie counterpart.

Let $N$ be the size of a given CS trie.
Due to Property (1) above,
it is significant to construct an indexing structure \emph{directly} from the CS trie.
Note that if we expand all texts from the CS trie,
then the total string length can blow up to $O(N^2)$.
Breslauer~\cite{breslauer_suffix_tree_tree_1998} introduced the suffix tree for a CS trie
which occupies $O(N)$ space,
and proposed an algorithm which constructs it in $O(N\sigma)$ time and working space.
Using the suffix tree of a CS trie,
one can report all paths of the CS trie that exactly matches with a given pattern
of length $m$ in $O(m \log \sigma + \occ)$ time,
where $\occ$ is the number of such paths to report.
Shibuya~\cite{Shibuya_construct_stree_of_tree} gave an optimal $O(N)$-time construction
for the suffix tree for a CS trie in the case of integer alphabets of size $N^{O(1)}$.
Nakashima et al.~\cite{position_heaps_of_trie_2012} proposed
the position heap for a CS trie,
which can be built in $O(N\sigma)$ time and working space
and supports exact pattern matching in $O(m \log \sigma + \occ)$ time.
Later, an optimal $O(N)$-time construction algorithm for
the position heap for a CS trie
in the case of integer alphabets of size $N^{O(1)}$
was presented~\cite{NakashimaIIBT15}.

In this paper, we propose the \emph{parameterized position heap} for a CS trie $\mathcal{T}$,
denoted by $\PPH(\mathcal{T})$,
which is the \emph{first} indexing structure for p-matching on a trie.
We show that $\PPH(\mathcal{T})$ occupies $O(N)$ space,
supports p-matching queries in $O(m \log (\sigma + \pi) + m\pi + \pocc)$ time,
and can be constructed in $O(N(\sigma + \pi))$ time and working space.
Hence, we achieve optimal pattern matching and construction in the case of
constant-sized alphabets.
The proposed construction algorithm is fairly simple,
yet uses a non-trivial idea that converts a given CS trie into a smaller trie
based on the p-matching equivalence.
The simplicity of our construction algorithm comes from the fact that
each string stored in (p-)position heaps is represented by an explicit node,
while it is not the case with (p-)suffix trees.
This nice property makes it easier and natural to adopt
the approaches by Brealauer~\cite{breslauer_suffix_tree_tree_1998}
and by Fujisato et al.~\cite{FujisatoNIBT18} that use
\emph{reversed suffix links} in order to process the texts from left to right.
We also remark that all existing p-suffix tree construction
algorithms~\cite{Baker96,Kosaraju95,Shibuya04}
in the case of a single text require somewhat involved data structures due to non-monotonicity
of parameterized suffix links~\cite{Baker93,Baker96},
but our p-position heap does not need such a data structure
even in the case of CS tries (this will also be discussed in the concluding section).

\section{Preliminaries}

Let $\Sigma$ and $\Pi$ be disjoint ordered sets
called a \emph{static alphabet} and a \emph{parameterized alphabet},
respectively.
Let $\sigma = |\Sigma|$ and $\pi = |\Pi|$.
An element of $\Sigma$ is called an \emph{s-character},
and that of $\Pi$ is called a \emph{p-character}.
In the sequel, both an s-character and a p-character
are sometimes simply called a \emph{character}.
An element of $\Sigma^*$ is called a \emph{string},
and an element of $(\Sigma \cup \Pi)^*$ is called a \emph{p-string}.
The length of a (p-)string $w$ is the number of
characters contained in $w$.
The empty string $\varepsilon$ is a string of length 0,
namely, $|\varepsilon| = 0$.
For a (p-)string $w = xyz$, $x$, $y$ and $z$ are called
a \emph{prefix}, \emph{substring}, and \emph{suffix} of $w$, respectively.
The set of prefixes of a (p-)string $w$ is denoted by $\Prefix(w)$.
The $i$-th character of a (p-)string $w$ is denoted by
$w[i]$ for $1 \leq i \leq |w|$,
and the substring of a (p-)string $w$ that begins at position $i$ and
ends at position $j$ is denoted by $w[i..j]$ for $1 \leq i \leq j \leq |w|$.
For convenience, let $w[i..j] = \varepsilon$ if $j < i$.
Also, let $w[i..] = w[i..|w|]$ for any $1 \leq i \leq |w|$.
For any (p-)string $w$, let $w^R$ denote the reversed string of $w$, i.e., $w^R = w[|w|] \cdots w[1]$.

Two p-strings $x$ and $y$ of length $k$ each
are said to \emph{parameterized match} (\emph{p-match})
iff there is a bijection $f$ on $\Sigma \cup \Pi$
such that $f(a) = a$ for any $a \in \Sigma$
and $f(x[i]) = y[i]$ for all $1 \leq i \leq k$.
For instance,
let $\Sigma = \{\mathtt{a}, \mathtt{b}\}$ and $\Pi = \{\mathtt{x}, \mathtt{y}, \mathtt{z}\}$,
and consider two p-strings $x = \mathtt{axbzzayx}$ and $y = \mathtt{azbyyaxz}$.
These two strings p-match,
since $x$ can be transformed to $y$
by applying a renaming bijection $f$
such that $f(\mathtt{a}) = \mathtt{a}$, $f(\mathtt{b}) = \mathtt{b}$,
$f(\mathtt{x}) = \mathtt{z}$, $f(\mathtt{y}) = \mathtt{x}$, and $f(\mathtt{z}) = \mathtt{y}$
to the characters in $x$.
We write $x \approx y$ iff two p-strings $x$ and $y$ p-match.
It is clear that $\approx$ is an equivalence relation on p-strings over $\Sigma \cup \Pi$.
We denote by $[x]$ the equivalence class for p-string $x$ w.r.t. $\approx$.
The representative of $[x]$ is the lexicographically smallest p-string in $[x]$,
which is denoted by $\spe(x)$.
It is clear that two p-strings $x$ and $y$ p-match iff $\spe(x) = \spe(y)$.
In the running example, $\spe(\mathtt{axbzzayx}) = \spe(\mathtt{azbyyaxz}) = \mathtt{axbyyazx}$

A \emph{common suffix trie} (\emph{CS trie}) $\mathcal{T}$ is a reversed trie
such that (1) each edge is directed towards the root,
(2) each edge is labeled with a character from $\Sigma \cup \Pi$,
and (3) the labels of the in-coming edges to each node are mutually distinct.
Each node of the trie represents the (p-)string obtained by concatenating 
the labels on the path from the node to the root.
An example of a CS trie is illustrated in Figure~\ref{fig:cst}.
$\CST(W)$ denotes the CS trie which represents a set $W$ of (p-)strings.

\begin{figure}[t]
  \centerline{
    \includegraphics[width=0.6\textwidth]{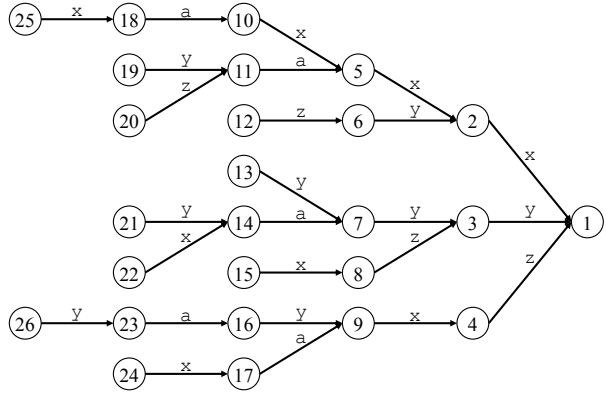}
  }
  \caption{
    CS trie for a set 
    $\{\mathtt{xaxxx}, \mathtt{yaxx}, \mathtt{zaxx}, \mathtt{zyx}, \mathtt{yyy}, 
    \mathtt{yayy}, \mathtt{xayy}, \mathtt{xzy}, \mathtt{yayxz}$, $\mathtt{xaxz}\}$
    of 10 p-strings over $\Sigma \cup \Pi$, where
    $\Sigma = \{\mathtt{a}\}$ and $\Pi = \{\mathtt{x}, \mathtt{y}, \mathtt{z}\}$.
  }
  \label{fig:cst}
\end{figure}

\section{Parameterized position heap of a common suffix trie}

In this section, we introduce the parameterized pattern matching (p-matching) problem  
on a common suffix trie that represents a set of p-strings, 
and propose an indexing data structure called a parameterized position heap of a trie.

\subsection{p-matching problem on a common suffix trie}
We introduce the \emph{p-matching problem} on a common suffix trie $\inputtrie$ and a pattern $p$.
We will say that a node $v$ in a common suffix trie p-matches with a pattern p-string $p$
if the prefix of length $|p|$ of the p-string represented by $v$ and $p$ p-match.
In this problem,
we preprocess a given common suffix trie $\inputtrie$ so that later,
given a query pattern $p$, 
we can quickly answer every node $v$ of $\inputtrie$
whose prefix of length $|p|$ and $p$ p-match.
For the common suffix trie in Figure~\ref{fig:cst},
when given query pattern $P = \mathtt{azy}$,
then we answer the nodes $17$ and $23$.

Let $W_{\inputtrie}$ be the set of all p-strings represented by nodes of $\inputtrie$.
By the definition of the common suffix trie, 
there may exist two or more nodes which represent different p-strings, but p-match.
We consider the common suffix trie which merges such nodes into the same node 
by using the representative of the parameterized equivalent class of these strings.
We define the set $\pcs(\inputtrie)$ of p-strings as follows:
$\pcs(\inputtrie) = \{\spe(w^R)^R \mid w \in W_{\inputtrie} \}$.
Then, the reversed trie which we want to consider is $\CST(\pcs(\inputtrie))$.
We refer to this reversed trie as the \emph{parameterized-common suffix trie} of $\inputtrie$, 
and denote it by $\pCST$ (i.e., $\pCST = \CST(\pcs(\inputtrie))$).
Each node of $\pCST$ stores pointers to its corresponding node(s) of $\inputtrie$. 
Then, by solving the p-matching problem on $\pCST$,
we can immediately answering p-matching queries on $\inputtrie$.
Figure~\ref{fig:p-cst} shows an example of $\pCST$.
In the rest of this paper, $N$ denotes the number of nodes of $\inputtrie$ 
and $N_p$ denotes the number of nodes of $\pCST$.
Note that $N \geq N_p$ always holds.

\begin{figure}[t]
  \centerline{
    \includegraphics[width=0.6\textwidth]{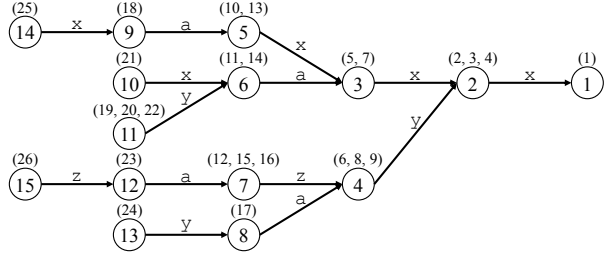}
  }
  \caption{
    Illustration of $\pCST$ for $\inputtrie$ 
    (where $\inputtrie$ is the common suffix trie illustrated in Figure~\ref{fig:cst}).
    Each node of $\pCST$ corresponds to nodes of $\inputtrie$ 
    which are labeled by elements in the tuple above the node of $\pCST$.
    For example, the node of $\pCST$ labeled 6 corresponds to 
    the nodes of $\inputtrie$ labeled 11 and 14.
  }
  \label{fig:p-cst}
\end{figure}

\subsection{Parameterized position heap of a common suffix trie}

Let $\mathcal{S} = \langle s_1, \ldots, s_k \rangle$ be a sequence of strings
such that for any $1 < i \leq k$, 
$s_i \not \in \Prefix(s_j)$ for any $1 \leq j < i$.

\begin{definition}[Sequence hash trees~\cite{coffman}]
\label{def:sev_hash}
The \emph{sequence hash tree} of a sequence
$\mathcal{S} = \langle s_1, \ldots, s_k \rangle$ of strings,
denoted $\SHT(\mathcal{S}) = \SHT(\mathcal{S})^{k}$, is a trie structure that is recursively defined as follows:
Let $\SHT(\mathcal{S})^{i} = (V_i, E_i)$.
Then 
\[
 \SHT(\mathcal{S})^{i} = 
  \begin{cases}
   (\{\varepsilon\}, \emptyset) & \mbox{if $i = 1$}, \\
   (V_{i-1} \cup \{u_{i}\}, E_{i-1} \cup \{(v_{i}, a, u_{i})\}) & \mbox{if $2 \leq i \leq k$},
  \end{cases}
\]
where $v_{i}$ is the longest prefix of $s_i$ which satisfies $v_{i} \in V_{i-1}$,
$a = s_i[|v_{i}|+1]$,
and $u_{i}$ is the shortest prefix of $s_i$ which satisfies $u_{i} \notin V_{i-1}$.
\end{definition}
Note that since we have assumed that each $s_i \in \mathcal{S}$ is not a prefix of 
$s_j$ for any $1 \leq j < i$, 
the new node $u_i$ and new edge $(v_{i}, a, u_{i})$ always exist 
for each $1 \leq i \leq k$.
Clearly $\SHT(\mathcal{S})$ contains $k$ nodes (including the root).

Let $\pstrset = \langle \spe(w_1), \ldots, \spe(w_{N_p}) \rangle$ be a sequence of p-strings 
such that $\{w_1, \ldots, w_{N_p}\} = \pcs(\inputtrie)$ and $|w_i| \leq |w_{i+1}|$ for any $1 \leq i \leq N_p - 1$.
$\pstrset(i)$ denote the sequence $\langle \spe(w_1), \ldots, \spe(w_{i}) \rangle$ for any $1 \leq i \leq N_p$, 
and $\pCST^i$ denote the common suffix trie of $\{\spe(w_1), \ldots, \spe(w_i)\}$, 
namely, $\pCST^i = \CST(\{\spe(w_1), \ldots, \spe(w_i)\})$.
The node of $\pCST$ which represents $w_i$ is denoted by $\cnode{i}$.
Then, our indexing data structure is defined as follows.

\begin{definition}[Parameterized positions heaps of a CST]\label{def:p_position_heap}
The \emph{parameterized position heap} (\emph{p-position heap})
for a common suffix trie $\inputtrie$, 
denoted by $\PPH(\inputtrie)$, is the sequence hash tree of $\pstrset$
i.e., $\PPH(\inputtrie) = \SHT(\pstrset)$.
\end{definition}

Let $\PPH(\inputtrie)^{i} = \SHT(\pstrset(i))$ for any $1 \leq i \leq N_p$
(i.e., $\PPH(\inputtrie)^{N_p} = \PPH(\inputtrie)$). 
The following lemma shows the exact size of $\PPH(\inputtrie)$.

\begin{lemma} \label{lem:position_nodes_correspondence}
  For any common suffix trie $\inputtrie$ such that the size of $\pCST$ is $N_p$,
  $\PPH(\inputtrie)$ consists of exactly $N_p$ nodes.
  Also, there is a one-to-one correspondence between
  the nodes of $\pCST$ and the nodes of $\PPH(\inputtrie)$.
\end{lemma}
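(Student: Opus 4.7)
The plan is to lean directly on Definition \ref{def:sev_hash} and the observation immediately following it: the sequence hash tree of a valid length-$k$ sequence of strings contains exactly $k$ nodes, one introduced per element of the sequence. Thus it suffices to (i) verify that $\pstrset$ is a valid input to $\SHT$, and (ii) exhibit the bijection.

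For (i), I would check the hypothesis $\spe(w_i) \notin \Prefix(\spe(w_j))$ for all $j < i$. Since $\pstrset$ is sorted by non-decreasing length of the $w_i$'s, a prefix relation $\spe(w_i) \in \Prefix(\spe(w_j))$ with $j < i$ would force $|w_i| \leq |w_j| \leq |w_i|$, and hence $\spe(w_i) = \spe(w_j)$, i.e., $w_i \approx w_j$. It then remains to argue that distinct elements of $\pcs(\inputtrie)$ lie in distinct $\approx$-classes. I would prove this by noting that reversal preserves p-matching, so $w_i \approx w_j$ implies $\spe(w_i^R) = \spe(w_j^R)$; and that every $w \in \pcs(\inputtrie)$ is a fixed point of the map $w \mapsto \spe(w^R)^R$, because writing $w = \spe(u^R)^R$ for some $u \in W_{\inputtrie}$ and using idempotency of $\spe$ yields $\spe(w^R) = \spe(\spe(u^R)) = \spe(u^R) = w^R$. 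Reversing both sides of $\spe(w_i^R) = \spe(w_j^R)$ then gives $w_i = w_j$, contradicting $i \neq j$.

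Once (i) is in hand, a direct induction on Definition \ref{def:sev_hash} shows that $\PPH(\inputtrie)^{i}$ has exactly $i$ nodes, so $\PPH(\inputtrie) = \PPH(\inputtrie)^{N_p}$ has exactly $N_p$ nodes. For the bijection, I would map the node $\cnode{i}$ of $\pCST$ to the node $\pnode{i}$ that is introduced in the $i$-th step of the $\SHT$ construction (taking $\pnode{1}$ to be the root $\varepsilon$, which matches $\cnode{1}$ since $w_1$ is the shortest element of $\pcs(\inputtrie)$, namely $\varepsilon$). Injectivity is immediate because each step introduces a fresh node, and surjectivity follows by cardinality.

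The only mildly delicate point is the sub-claim that the operator $w \mapsto \spe(w^R)^R$ selects one distinct representative per $\approx$-class of $W_{\inputtrie}$; every other step is either a length comparison or a direct appeal to Definition \ref{def:sev_hash}.
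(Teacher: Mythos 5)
Your proposal is correct and follows essentially the same route as the paper: both arguments reduce to checking that the length-sorted sequence $\pstrset$ satisfies the prefix-freeness precondition of Definition~\ref{def:sev_hash} (so that each step of the sequence hash tree construction inserts exactly one node), and both obtain the bijection by sending $\cnode{i}$ to the node created at step $i$. The only difference is that you spell out, via the reversal/fixed-point argument, why distinct elements of $\pcs(\inputtrie)$ cannot p-match, a fact the paper simply asserts.
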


\begin{proof}
  Initially, $\PPH(\inputtrie)^1$ consists only of the root
  that represents $\varepsilon$ since $w_1 = \varepsilon$.
  Let $i$ be an integer in $[1..N_p]$.
  Since $w_i$ does not p-match with $w_j$ 
  and $|\spe(w_i)| \geq |\spe(w_j)|$ for any $1 \leq j < i$, 
     there is a prefix of $\spe(w_i)$ 
  that is not represented by any node of $\PPH(\inputtrie)^{i-1}$.
  Therefore, when we construct $\PPH(\inputtrie)^i$ from $\PPH(\inputtrie)^{i-1}$,
  then exactly one node is inserted, which corresponds to the node representing $w_i$.
\end{proof}

Let $\pnode{i}$ be the node of $\PPH(\inputtrie)$ which corresponds to $w_i$.
For any p-string $p \in (\Sigma \cup \Pi)^+$,
we say that $p$ is \emph{represented} by $\PPH(\inputtrie)$
iff $\PPH(\inputtrie)$ has a path which starts from the root and spells out $p$.

Ehrenfeucht et al.~\cite{ehrenfeucht_position_heaps_2011}
introduced \emph{maximal reach pointers},
which are used for efficient pattern matching queries on position heaps.
Diptarama et al.~\cite{DiptaramaKONS17} and Fujisato et al.~\cite{FujisatoNIBT18}
also introduced maximal reach pointers for their p-position heaps,
and showed how efficient pattern matching queries can be done.
We can naturally extend the notion of maximal reach pointers
to our p-position heaps:

\begin{definition}[Maximal reach pointers]
  For each $1 \leq i \leq N_p$,
  the \emph{maximal reach pointer} of the node $\pnode{i}$
  points to the deepest node $v$ of $\PPH(\inputtrie)$ 
  such that $v$ represents a prefix of $\spe(w_i)$.
\end{definition}

The node which is pointed by the maximal reach pointer of node $\pnode{i}$ is denoted by $\mrp(i)$.
The \emph{augmented} $\PPH(\inputtrie)$ is $\PPH(\inputtrie)$
with the maximal reach pointers of all nodes.
For simplicity, if $\mrp(i)$ is equal to $\pnode{i}$,
then we omit this pointer.
See Figure~\ref{fig:aug-pph}
for an example of augmented $\PPH(\inputtrie)$.

\begin{figure}[tb]
  \centerline{
    \begin{tabular}{|l|r|} \hline
      $\spe(w_{1})$ & $\underline{\varepsilon}$ \\ \hline
      $\spe(w_{2})$ & $\underline{\mathtt{x}}$ \\ \hline
      $\spe(w_{3})$ & $\underline{\mathtt{xx}}$ \\ \hline
      $\spe(w_{4})$ & $\underline{\mathtt{xy}}$ \\ \hline
      $\spe(w_{5})$ & $\underline{\mathtt{xxx}}$ \\ \hline
      $\spe(w_{6})$ & $\underline{\mathtt{axx}}$ \\ \hline
      $\spe(w_{7})$ & $\underline{\mathtt{xyz}}$ \\ \hline
      $\spe(w_{8})$ & $\underline{\mathtt{axy}}$ \\ \hline
      $\spe(w_{9})$ & $\underline{\mathtt{axx}}\mathtt{x}$ \\ \hline
      $\spe(w_{10})$ & $\underline{\mathtt{xaxx}}$ \\ \hline
      $\spe(w_{11})$ & $\underline{\mathtt{xay}}\mathtt{y}$ \\ \hline
      $\spe(w_{12})$ & $\underline{\mathtt{axy}}\mathtt{z}$ \\ \hline
      $\spe(w_{13})$ & $\underline{\mathtt{xaxy}}$ \\ \hline
      $\spe(w_{14})$ & $\underline{\mathtt{xaxx}}\mathtt{x}$ \\ \hline
      $\spe(w_{15})$ & $\underline{\mathtt{xaxy}}\mathtt{x}$ \\ \hline
    \end{tabular}
    \hfill
    \raisebox{-25mm}{\includegraphics[scale=0.5]{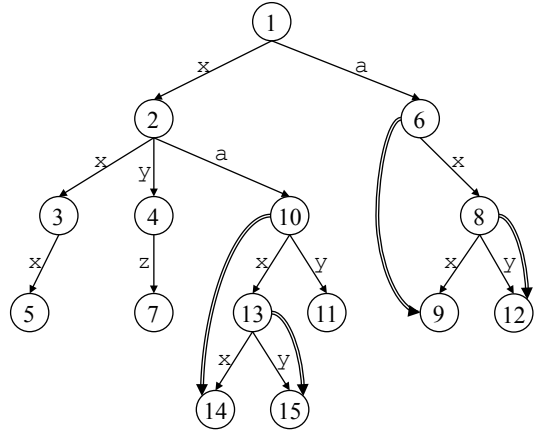}}
  }
  \caption{
    To the left is the list of $\spe(w_i)$
    for p-strings represented by $\pCST$ of Figure~\ref{fig:p-cst},
    where $\Sigma = \{\mathtt{a}\}$ and $\Pi = \{\mathtt{x}, \mathtt{y}, \mathtt{z}\}$.
    To the right is an illustration for augmented $\PPH(\inputtrie)$ 
    where the maximal reach pointers are indicated by the double-lined arrows.
    The underlined prefix of each $\spe(w_i)$ in the left list
    denotes the longest prefix of $\spe(w_i)$ that was represented in
    $\PPH(\inputtrie)$ and hence, the maximal reach pointer of the node with label $i$
    points to the node which represents this underlined prefix of $\spe(w_i)$.
  }
  \label{fig:aug-pph}
\end{figure}

\subsection{P-matching with augmented parameterized position heap}

It is straightforward that by applying Diptarama et al.'s pattern matching algorithm to
our $\PPH(\inputtrie)$ augmented with maximal reach pointers,
parameterized pattern matching can be done 
in $O(m \log (\sigma + \pi) + m \pi + \pocc')$ time 
where $\pocc'$ is the number of nodes in $\pCST$ that p-match with the pattern.
Since each node in $\pCST$ stores the pointers to the corresponding nodes in $\inputtrie$, 
then we can answer all the nodes that p-match with the pattern.

Diptarama et al.'s algorithm stands on Lemmas~{13} and 14 of \cite{DiptaramaKONS17}.
These lemmas can be extended to our $\PPH(\inputtrie)$ as follows:

\begin{lemma} \label{lem:pm_lemma_1}
  Suppose $\spe(p)$ is represented by a node $u$ of augmented $\PPH(\inputtrie)$.
  Then $p$ p-matches with the prefix of length $|p|$ of $w_i$
  iff $\mrp(i)$ is $u$ or a descendant of $u$.
\end{lemma}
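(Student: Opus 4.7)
My plan is to reduce the lemma to a preliminary fact about the interaction between the canonical representative $\spe$ and prefix truncation, and then argue each direction separately.

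First I would establish the fact that $\spe$ commutes with taking prefixes: for any p-string $w$ and $1 \le k \le |w|$, the length-$k$ prefix of $\spe(w)$ is exactly $\spe(w[1..k])$. Intuitively, the lexicographically smallest representative of a p-string renames parameter characters in the order in which they first appear, and this renaming depends only on the characters read so far. As a corollary, $p$ p-matches the length-$|p|$ prefix of $w_i$ if and only if $\spe(p)$ is the length-$|p|$ prefix of $\spe(w_i)$.

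For the forward direction, I would assume that $p$ p-matches the length-$|p|$ prefix of $w_i$. By the corollary above, $\spe(p)$ is a prefix of $\spe(w_i)$, so the node $u$, which represents $\spe(p)$, represents a prefix of $\spe(w_i)$. Because the nodes of $\PPH(\inputtrie)$ that represent prefixes of the fixed string $\spe(w_i)$ all lie on a single root-to-node path in the trie, they form a chain under the ancestor relation. Since $\mrp(i)$ is by definition the deepest such node, $\mrp(i)$ must be either $u$ itself or a descendant of $u$.

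For the backward direction, I would assume that $\mrp(i)$ equals $u$ or is a descendant of $u$. By definition $\mrp(i)$ represents a prefix $q$ of $\spe(w_i)$. If $\mrp(i)=u$, then $\spe(p)=q$ is directly a prefix of $\spe(w_i)$; otherwise, the string $\spe(p)$ represented by $u$ is a proper prefix of $q$ by the trie structure, and hence also a prefix of $\spe(w_i)$. Either way, the length-$|p|$ prefix of $\spe(w_i)$ equals $\spe(p)$, which by the corollary gives $p \approx w_i[1..|p|]$. The main obstacle I anticipate is cleanly justifying the prefix-commutation property of $\spe$, since the paper defines $\spe$ purely as the lexicographically smallest member of its equivalence class. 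If arguing this directly from the lexicographic definition becomes awkward, I would instead route through the standard previous-occurrence encoding of Baker as an auxiliary canonical form: it is manifestly prefix-compatible, characterizes the same equivalence $\approx$, and can be converted to and from $\spe$ without changing any of the arguments above.
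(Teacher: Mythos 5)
Your proposal is correct and follows essentially the same route as the paper's proof: both reduce the claim to the fact that $\spe$ commutes with taking prefixes (so that $p$ p-matches $w_i[1..|p|]$ exactly when $\spe(p)$ is a prefix of $\spe(w_i)$) and then invoke the definition of $\mrp(i)$ as the deepest node on the chain of nodes representing prefixes of $\spe(w_i)$. The only real difference is that you state and justify the prefix-commutation property explicitly, whereas the paper uses it silently by treating $\spe(w_i)[1..k]$ and $\spe(w_i[1..k])$ interchangeably.
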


\begin{proof}
   Let $u$ be the node in augmented $\PPH(\inputtrie)$ that represents $\spe(p)$.

   Assume that $p$ p-matches with the prefix of length $|p|$ of $w_i$ 
   and the node $v$ satisfying $\id(v) = i$ represents $\spe(w_i)[1..k]$.
   Then either $\spe(w_i)[1..k]$ is a prefix of $\spe(p)$ 
   or $\spe(p)$ is a prefix of $\spe(w_i)[1..k]$.
   This implies that $v$ is either an ancestor or a descendant of $u$.
   If $v$ is an ancestor of $u$, then $v$ ($\mrp(i)$) points to $u$ or an its descendant 
   since $\spe(w_i)[1..|p|]$ is represented by $u$.  
   By the definition of maximal reach pointers, 
   if $v$ is a descendant of $u$, $v$ ($\mrp(i)$) points to $u$ or an its descendant.

   Assume that $\mrp(i)$ is $u$ or an its descendant.
   Let $k$ be the integer such that $\mrp(i)$ represents $\spe(w_i)[1..k]$.
   Then $\spe(p)$ is a prefix of $\spe(w_i)[1..k]$.
   This implies that $p$ p-matches with the prefix of length $|p|$ of $w_i$.
\end{proof}

\begin{lemma} \label{lem:pm_lemma_2}
  Suppose that $\spe(p)$ is not represented in augmented $\PPH(\inputtrie)$.
  There is a factorization $q_1, \ldots, q_k$ of $p$ 
  s.t. $q_j$ is the longest prefix of $\spe(p[|q_1 \cdots q_{j-1}| + 1..|p|])$ 
  that is represented in augmented $\PPH(\inputtrie)$. 
  If $p$ p-matches with the prefix of length $|p|$ of $w_i$, 
  then $\mrp(i + |q_1 \cdots q_{j-1}|)$ is the node which represents $\spe(q_j)$ for any $1 \leq j < k$ 
  and $\mrp(i + |q_1 \cdots q_{k-1}|)$ is the node 
  which represents $\spe(q_k)$ or a descendant of $\mrp(i + |q_1 \cdots q_{k-1}|)$.
\end{lemma}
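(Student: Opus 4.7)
My plan is to prove the claim by induction on $j$, adapting Diptarama et al.'s argument for a single p-string to the trie setting. Two preparatory facts drive the proof. First, a substitution property: if a bijection $f$ witnesses that $p$ p-matches $w_i[1..|p|]$, then the same $f$ witnesses that $p[\ell+1..|p|]$ p-matches $w_i[\ell+1..|p|]$ for every $0 \leq \ell < |p|$; combined with the prefix-compatibility $\spe(x[1..k]) = \spe(x)[1..k]$ of the canonical encoding (rename parameters in order of first occurrence), this yields $\spe(p[\ell+1..|p|]) = \spe(w_i[\ell+1..])[1..|p|-\ell]$. Second, the suffix $w_i[\ell+1..]$ is itself an element of $\pcs(\inputtrie)$, namely the ancestor of $\cnode{i}$ lying $\ell$ edges closer to the root in $\pCST$; I read the paper's shorthand ``$i+\ell$'' as the index in $\pstrset$ of this ancestor, extending the single-string convention to the trie.

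With these in hand, set $\ell_{j-1} = |q_1 \cdots q_{j-1}|$ and let $u_j$ denote the node of $\PPH(\inputtrie)$ representing $\spe(q_j)$. By the substitution property, $\spe(q_j)$ is a prefix of $\spe(w_i[\ell_{j-1}+1..])$, hence $u_j$ is a represented prefix of $\spe(w_{i+\ell_{j-1}})$; by the definition of maximal reach pointers, $\mrp(i+\ell_{j-1})$ is therefore either $u_j$ itself or a proper descendant of $u_j$. This is already the desired conclusion for $j=k$. For $j<k$ I would rule out the proper-descendant case by contradiction: were $\mrp(i+\ell_{j-1})$ strictly deeper than $u_j$, some character $c$ would satisfy that $\spe(q_j)\cdot c$ is represented in $\PPH(\inputtrie)$ and is a prefix of $\spe(w_i[\ell_{j-1}+1..])$. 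But $j<k$ forces the remaining tail $p[\ell_j+1..|p|]$ to be nonempty, so $|q_j|+1 \leq |p|-\ell_{j-1}$; the substitution property then transfers the extra character to $\spe(p[\ell_{j-1}+1..|p|])$, exhibiting $\spe(q_j)\cdot c$ as a represented prefix of the latter and contradicting the maximality in the definition of $q_j$.

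The hard part will be not the induction itself but the two preparatory facts: verifying the prefix-compatibility $\spe(x[1..k]) = \spe(x)[1..k]$ of the lex-smallest-representative encoding used here, and pinning down why the notation ``$i+\ell$'' uniformly picks out the correct ancestor index in $\pstrset$ when $\pstrset$ is sorted by length rather than by any natural positional order. Once these are settled, the case analysis above is a clean extension of the single-string proof.
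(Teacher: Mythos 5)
Your proposal follows essentially the same route as the paper's own proof: show that $\spe(q_j)$ is a prefix of the canonical form of the relevant suffix of $w_i$, conclude that the corresponding $\mrp$ is that node or a descendant, and for $j<k$ rule out a proper descendant by the maximality of $q_j$ (the paper compresses all of this into ``Similarly, \dots''). The two preparatory facts you flag do hold and are simply left implicit in the paper: $\spe(x[1..k])=\spe(x)[1..k]$ follows because the lexicographically smallest representative renames parameters by order of first occurrence, and the paper's ``$i+\ell$'' is indeed an abuse of notation for the index of the ancestor of $\cnode{i}$ at distance $\ell$ toward the root (which, since $\pstrset$ is sorted by length, is in fact a smaller index).
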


\begin{proof}
   Assume that $p = q_1, \ldots, q_k$ p-matches with the prefix of length $|p|$ of $w_i$.
   Since $\spe(q_1)$ is a prefix of $\spe(p)$,
   then $\mrp(i)$ is the node which represents $\spe(q_1)$ or an its descendant.
   If $\mrp(i)$ is an descendant of the node which represents $\spe(q_1)$, 
   then $q_1$ is not the longest prefix of $\spe(p)$ that is represented in augmented $\PPH(\inputtrie)$.
   Thus $\mrp(i)$ is the node which represents $\spe(q_1)$.
   Similarly, for every $1 < j < k$, $\spe(q_j)$ is a prefix of $\spe(p[|q_1 \cdots q_{j-1}| + 1..|p|])$ 
   and p-matches with the prefix of length $|q_j|$ of $w_i[|q_1 \cdots q_{j-1}|+1..|w_i|]$.
   Thus $\mrp(i + |q_1 \cdots q_{j-1}|)$ is the node which represents $\spe(q_j)$.
   Finally, $\mrp(i + |q_1 \cdots q_{k-1}|)$ has to be the node which represents $\spe(q_k)$ or an its descendant 
   since $q_k$ is a suffix of $p$.
\end{proof}

\begin{theorem}
  Using our augmented $\PPH(\inputtrie)$,
  one can perform parameterized pattern matching queries
  in $O(m \log (\sigma + \pi) + m\pi + \pocc)$ time.
\end{theorem}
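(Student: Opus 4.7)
The plan is to lift the pattern-matching algorithm that Diptarama et al.\ designed for the p-position heap of a single string to our common-suffix-trie setting, using Lemmas~\ref{lem:pm_lemma_1} and~\ref{lem:pm_lemma_2} in place of their single-string analogues. The algorithm proceeds in three conceptual phases: a greedy descent that attempts to spell $\spe(p)$ in $\PPH(\inputtrie)$, a factorization phase triggered when the descent gets stuck, and a reporting phase that exploits the maximal reach pointers.

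In the descent phase, I maintain the prev-encoding of the prefix of $p$ consumed so far, so that producing the next character of $\spe(p)$ and matching it against an outgoing edge costs $O(\log(\sigma + \pi))$ per step via binary search on ordered children over an alphabet of size $\sigma+\pi$. If the whole of $\spe(p)$ is spelled successfully and lands at some node $u$, Lemma~\ref{lem:pm_lemma_1} reduces the task to enumerating the indices $i$ for which $\mrp(i)$ is $u$ or a descendant of $u$; to this end I would precompute, for each heap node, the inverted list of maximal reach pointers hitting it, so that the enumeration runs in time proportional to its output, and then expand each reported $\pCST$-node through the stored pointers to its corresponding $\inputtrie$-nodes.

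Otherwise, the descent yields the first factor $q_1$ as the longest prefix of $\spe(p)$ represented in the heap, and I resume the procedure on $p[|q_1|+1\,..\,|p|]$. Shifting the window of reference forces a re-initialization of the prev-encoding, which is where parameterized matching pays its characteristic $O(m\pi)$ overhead: each shift may require rescanning up to $\pi$ positions into the new suffix to rebuild the last-occurrence information, and the amortized cost over all shifts is $O(m\pi)$ by the same bookkeeping used in the single-string analysis. Iterating yields the factorization $q_1,\ldots,q_k$ of Lemma~\ref{lem:pm_lemma_2}. By that lemma, an index $i$ is a genuine match iff $\mrp(i+|q_1 \cdots q_{j-1}|)$ is the node representing $\spe(q_j)$ for every $j<k$ and is that node or a descendant for $j=k$; I would enumerate candidates by walking the inverted list at the node representing $\spe(q_1)$ and verifying the chain, then fan each surviving $\pCST$-node out to its $\inputtrie$-nodes.

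The main obstacle will be keeping the verification work in the factorized case within $O(m\pi + \pocc)$ rather than the naive $O(m \cdot \pocc')$. The saving comes from Lemma~\ref{lem:pm_lemma_2}: a candidate $i$ that survives the check at factor boundary $j-1$ is already pinned to a unique expected node for $\mrp(i+|q_1 \cdots q_{j-1}|)$, so intersecting the inverted lists across successive boundaries can be organized so that the total verification work is linear in the number of surviving candidates plus the factorization length. Summing the descent cost $O(m\log(\sigma+\pi))$, the factorization/prev-re-initialization cost $O(m\pi)$, and the reporting cost $O(\pocc)$ yields the claimed bound.
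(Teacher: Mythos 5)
Your proposal follows essentially the same route as the paper, which gives no separate proof of this theorem but derives it directly from Lemmas~\ref{lem:pm_lemma_1} and~\ref{lem:pm_lemma_2} together with Diptarama et al.'s query algorithm and the stored pointers from $\pCST$ nodes back to $\inputtrie$ nodes. Your additional details (the $O(\log(\sigma+\pi))$ descent, the $O(m\pi)$ re-encoding cost at factor boundaries, and the candidate-intersection argument keeping verification within $O(m+\pocc)$) are consistent with that algorithm and correctly fill in what the paper leaves implicit.
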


\section{Construction of parameterized position heaps}

In this section, we show how to construct the augmented $\PPH(\inputtrie)$ 
of a given common suffix trie $\inputtrie$ of size $N$.
For convenience, we will sometimes identify each node $v$ of $\PPH(\inputtrie)$ 
with the string which is represented by $v$.
In Section~\ref{subsec:comp_pCST}, we show how to compute $\pCST$ from a given common suffix trie $\inputtrie$.
In Section~\ref{subsec:comp_pph}, we propose how to construct $\PPH(\inputtrie)$ from $\pCST$.

\subsection{Computing $\pCST$ from $\inputtrie$} \label{subsec:comp_pCST}

Here, we show how to construct $\pCST$ of a given $\inputtrie$ of size $N$.

\begin{lemma} \label{lem:comp_pCST}
  For any common suffix trie $\inputtrie$ of size $N$,
  $\pCST$ can be computed in $O(N \pi)$ time and space.
\end{lemma}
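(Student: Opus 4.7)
The plan is to compute $\pCST$ by a single depth-first traversal of $\inputtrie$ starting at the root, during which I maintain incrementally the $\spe$-encoding of the string labeling the current root-to-node path and simultaneously grow $\pCST$ top-down from its root. The key observation is that if $u$ is the parent of $v$ in $\inputtrie$ and $c$ labels the edge from $v$ to $u$, then $w_v = c\cdot w_u$, hence $w_v^R = w_u^R\cdot c$. Because $\spe$ renames parameter characters in the order of their first occurrence (using the canonical names of $\Pi$ in the natural order of $\Pi$), $\spe(w_v^R)$ is obtained from $\spe(w_u^R)$ by appending a single character $c'$, defined by: if $c\in\Sigma$, then $c'=c$; if $c\in\Pi$ has already appeared in $w_u^R$, then $c'$ is the canonical name already assigned to $c$; otherwise $c'$ is the next fresh canonical parameter name.

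Concretely, during the DFS I would maintain (i) a bijection $f$ on $\Pi$ from each parameter character seen so far on the current path to its canonical name, stored as two length-$\pi$ arrays; (ii) a counter $k$ recording the next fresh canonical name; and (iii) a pointer $h$ to the $\pCST$-node associated with the current $\inputtrie$-node. When descending from $u$ to $v$ along edge label $c$, I compute $c'$ in $O(1)$ time (updating $f$ and incrementing $k$ if $c$ is a fresh parameter character), then in $\pCST$ look for, or if absent create, a reverse-child $h'$ of $h$ whose in-coming edge is labeled $c'$, and record the correspondence $v\mapsto h'$. On backtracking I undo the extension of $f$ if any, decrement $k$, and restore $h$. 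The root of $\inputtrie$ is initially paired with the root of $\pCST$ (representing $\varepsilon$), and each $\pCST$-node additionally collects the list of $\inputtrie$-nodes mapped to it.

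Correctness follows by induction on the DFS depth: the $\pCST$-node assigned to each $v$ is exactly the node of the common prefix trie of $\{\spe(w^R)\mid w\in W_{\inputtrie}\}$ that represents $\spe(w_v^R)$; reversing this trie's edges yields $\CST(\pcs(\inputtrie))=\pCST$. For the cost, every edge of $\inputtrie$ is traversed twice and contributes $O(1)$ work (one bijection update plus one reverse-child lookup and possibly one node creation), so the DFS itself is $O(N)$. Equipping each $\pCST$-node with a dense array of size $\pi$ for its parameter-labeled reverse-children adds $O(N_p\pi)=O(N\pi)$ space and initialization time; the static-labeled reverse-children can be kept in a sparser structure (e.g., a hash table) of total size $O(N)$, since each static-labeled edge of $\pCST$ is charged to at least one static-labeled edge of $\inputtrie$. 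Adding up gives the claimed $O(N\pi)$ time and space.

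The main obstacle I anticipate is steering clear of the naive $O(N(\sigma+\pi))$ bound that would result from allocating dense size-$(\sigma+\pi)$ child arrays at every $\pCST$-node. What makes the tighter $O(N\pi)$ bound attainable is the asymmetry between the two parts of the alphabet in $\pCST$: only the $\pi$ canonical parameter names can label parameter-edges (regardless of which original parameter character triggered them), which justifies a dense per-node array of length $\pi$, whereas the total number of static-labeled edges of $\pCST$ is dominated by that of $\inputtrie$ and so can be stored sparsely. A minor but necessary subtlety is the $O(1)$ bookkeeping for undoing $f$ on backtrack, achieved by recording at each descent whether $f$ was extended and, if so, on which character.
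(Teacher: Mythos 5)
Your proof is correct and follows essentially the same approach as the paper's: both traverse $\inputtrie$ from the root and exploit the fact that $\spe(w_v^R)$ extends $\spe(w_u^R)$ by a single character along each edge, so that $\pCST$ can be grown top-down in a single pass while the parameter renaming is maintained incrementally. The differences are only in bookkeeping: the paper uses a breadth-first traversal and stores a full $O(\pi)$-size substitution table at every node of $\inputtrie$ (which is where its $O(N\pi)$ cost arises), whereas you use a depth-first traversal with one globally maintained, undoable bijection and charge the $O(N\pi)$ to per-node child arrays of $\pCST$ instead. One small caveat: a hash table for the static-labeled reverse-children yields only an expected-time bound; for a worst-case bound you would need, e.g., to batch the incoming $\inputtrie$-edges of each $\pCST$-node and bucket them via a reusable global array indexed by characters (a lookup issue the paper's own proof silently glosses over).
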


\begin{proof}
  We process every node of $\inputtrie$ in a breadth first manner.
  Let $x_j$ be the p-string which is represented by $j$-the node of $\inputtrie$.
  Suppose that we have processed the first $k$ nodes and have computed $\pCST^i$ ($i \leq k$).
  We assume that the $j$-th node of $\inputtrie$, for any $1 \leq j \leq k$, holds
  the resulting substitutions
  from $x_j$ to $\spe((x_j)^R)^R$ (i.e., $x_j[\alpha]$ is mapped to $\spe((x_j)^R)^R[\alpha]$),
  and also a pointer to the corresponding node of $\pCST^i$ (i.e., pointer to the node representing $\spe((x_j)^R)^R$).
  We consider processing the $(k+1)$-th node of $\inputtrie$.
  Since $x_{k+1}$ is encoded from right to left,
  we can determine a character $\spe((x_{k+1})^R)^R[1]$ in $O(\pi)$ time.
  Then, we can insert a new node that represents $\spe((x_{k+1})^R)^R$ 
  as a parent of the node which represents $\spe((x_{k+1}[2..|x_{k+1}|])^R)^R$
  if there does not exist such a node in $\pCST^i$.
  Therefore, we can compute $\pCST$ in $O(N \pi)$ time and space.
\end{proof}

\subsection{Computing $\PPH(\inputtrie)$ from $\pCST$} \label{subsec:comp_pph}

For efficient construction of our $\PPH(\inputtrie)$,
we use \emph{reversed suffix links} defined as follows.

\begin{definition}[Reversed suffix links] \label{def:reversed_suffix_link}
For any node $v$ of $\PPH(\inputtrie)$ and a character $a \in \Sigma \cup \Pi$,
let
\[
 \rslink(a, v) =
 \begin{cases}
   \spe(av) & \mbox{if $\spe(av)$ is represented by $\PPH(\inputtrie)$},\\
   \mbox{undefined} & \mbox{otherwise}.
 \end{cases}
\]
\end{definition}

\begin{figure}[tb]
  \centerline{
    \raisebox{-25mm}{\includegraphics[scale=0.5]{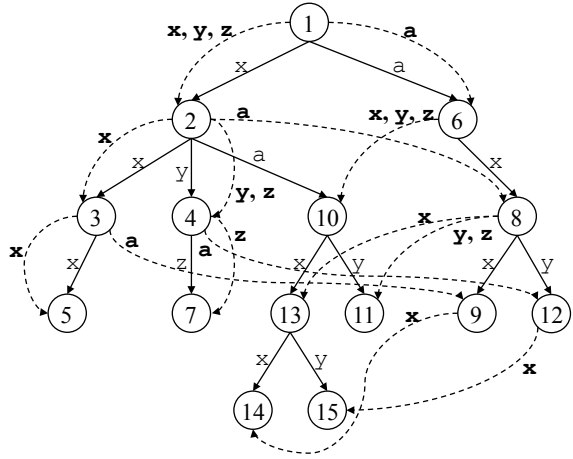}}
  }
  \caption{
    $\PPH(\inputtrie)$ with all reversed suffix links is illustrated in this figure.
    Each dashed arrow shows a reversed suffix link.
    The label of a suffix link is drawn by a bold character.
  }
  \label{fig:rsl-pph}
\end{figure}

See Figure~\ref{fig:rsl-pph} for an example of $\PPH(\inputtrie)$ with reversed suffix links.
In our algorithm, firstly, we insert a new node $h_i$ of $\PPH(\inputtrie)^{i}$ to $\PPH(\inputtrie)^{i-1}$.
After that, we add new suffix links which point to $h_i$.
When we have computed $\PPH(\inputtrie)$, then we compute all maximal reach pointers of $\PPH(\inputtrie)$.

\subsubsection{Inserting a new node}

Assume that $\cnode{j}$ (i.e., $j$-th node of $\pCST$) 
is the child of $\cnode{i}$ for any $2 \leq i \leq N_p$.
Consider to insert $\pnode{i}$ (i.e., the node of $\PPH(\inputtrie)$ which corresponds to $c_i$) 
to $\PPH(\inputtrie)^{i-1}$.
We show how to find the parent of $\pnode{i}$ by starting from $\pnode{j}$.
There are 3 cases based on $w_i[1]$ as follows:
\begin{itemize}
  \item $w_i[1] \in \Pi$ and $w_i[1]$ appears in $w_j[1..|\pnode{j}|]$ (Lemma~\ref{lem:insertion_1}),
  \item $w_i[1] \in \Pi$ and $w_i[1]$ does not appear in $w_j[1..|\pnode{j}|]$ (Lemma~\ref{lem:insertion_2}),
  \item $w_i[1] \in \Sigma$ (Lemma~\ref{lem:insertion_3}).
\end{itemize}

\begin{lemma} \label{lem:insertion_1}
  Assume that $w_i[1] \in \Pi$ appears in $w_j[1..|\pnode{j}|]$, 
  and $a$ is the character in $\Pi$ such that $a = \spe(w_j[1..|\pnode{j}|])[\alpha]$ and $w_i[1] = w_j[\alpha]$ 
  for some $1 \leq \alpha \leq |w_j[1..|\pnode{j}|]|$.
  Let $\pnode{k}$ be the node of $\PPH(\inputtrie)^{i-1}$ 
  which is the lowest ancestor of $\pnode{j}$ 
  that has a reversed suffix link labeled with $a$.
  Then, $\pnode{i}$ is a child of the node representing $\rslink(a, \pnode{k})$.
\end{lemma}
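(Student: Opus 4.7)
The plan is to invoke the SHT definition to reduce the claim to identifying the longest prefix of $\spe(w_i)$ already present in $\PPH(\inputtrie)^{i-1}$, and then to locate this prefix by an ancestor walk from $\pnode{j}$ guided by reversed suffix links.

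By Definition~\ref{def:sev_hash}, $\pnode{i}$ is attached as a child of the node representing the longest prefix of $\spe(w_i)$ appearing in $\PPH(\inputtrie)^{i-1}$, so the whole task reduces to showing that this prefix is exactly $\rslink(a,\pnode{k})$. From the parent--child relation between $\cnode{j}$ and $\cnode{i}$ in $\pCST$ we have $w_i = w_i[1]\cdot w_j$, and since $\spe$ commutes with taking prefixes (the order of first appearance of p-characters in $w[1..k]$ is the same as in $w$), every nonempty prefix of $\spe(w_i)$ of length $\ell+1$ equals $\spe(w_i[1]\cdot w_j[1..\ell])$.

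The key algebraic step is the identity
\[
\spe(w_i[1]\cdot w_j[1..\ell]) \;=\; \spe\!\bigl(a\cdot \spe(w_j)[1..\ell]\bigr),
\]
where $a = \spe(w_j[1..|\pnode{j}|])[\alpha]$ is the canonical label that $w_i[1]$ inherits inside $\spe(w_j)$. This holds because $w_i[1]$ and $w_j[\alpha]$ are literally the same p-character, so the bijection canonicalizing $w_j$, extended by $w_i[1]\mapsto a$, exhibits a p-matching between the two sides; the identity persists even for $\ell<\alpha$, because both sides then rename the leading character to the smallest p-character. By the SHT construction the ancestors of $\pnode{j}$ at depths $0,\ldots,|\pnode{j}|$ are exactly the nodes representing $\spe(w_j)[1..\ell]$, so $\spe(w_i)[1..\ell+1]$ lies in $\PPH(\inputtrie)^{i-1}$ iff the corresponding ancestor carries an $a$-labeled reversed suffix link; by the definition of $\pnode{k}$ as the lowest such ancestor, $\rslink(a,\pnode{k})$ is the longest matching prefix produced along this walk, and $\pnode{i}$ hangs from it.

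The main obstacle I anticipate is the matching lower bound: ruling out that some descendant of $\pnode{j}$ in $\PPH(\inputtrie)^{i-1}$ represents a still longer prefix of $\spe(w_i)$, since the ancestor walk would miss it. I would discharge this with an invariant maintained by the incremental construction: any such descendant is some previously inserted $\pnode{m}$ ($j<m<i$) whose $\spe(w_m)$ agrees with $\spe(w_j)$ on a prefix longer than $|\pnode{j}|$. A careful case analysis, using that $\pstrset$ is ordered by length and that $\rslink$'s are consistent along parent--child chains in $\PPH(\inputtrie)^{i-1}$, then shows that an $a$-labeled reversed suffix link at such a descendant would already force $\rslink(a,\pnode{k})$ to be deeper than claimed, contradicting the maximality of $\pnode{k}$.
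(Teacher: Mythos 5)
Your argument is essentially the paper's own proof: the same chain of $\spe$-identities giving $\rslink(a,\pnode{k})=\spe(a\cdot\spe(w_j[1..\ell-1]))=\spe(w_i[1..\ell])$, followed by the same appeal to the minimality of $\pnode{k}$ to rule out a represented prefix of length $\ell+1$. The ``obstacle'' you defer to a construction invariant is dispatched in the paper without one, using the observation you already have in hand: a node representing $\spe(w_i)[1..\ell+1]=\spe(a\cdot\spe(w_j[1..\ell]))$ would force an $a$-labeled reversed suffix link out of the node representing $\spe(w_j[1..\ell])$, which lies on the path from the root through $\pnode{j}$ strictly below $\pnode{k}$, contradicting the choice of $\pnode{k}$ as the lowest such ancestor.
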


\begin{proof}
  Let $\ell$ be the length of $\rslink(a, \pnode{k})$.
  To prove this lemma, we show that 
  \begin{enumerate}
    \item $\rslink(a, \pnode{k}) = \spe(w_i)[1..\ell]$, and 
    \item There does not exist a node which represents $\spe(w_i)[1..\ell+1]$ in $\PPH(\inputtrie)^{i-1}$.
  \end{enumerate}

  By the definition of reversed suffix links and $\spe$, we have
  \begin{eqnarray*}
  \rslink(a, \pnode{k}) &=& \spe(a \cdot \spe(w_j[1..\ell-1])) = \spe(w_i[1] \cdot w_j[1..\ell-1])\\
                        &=& \spe(w_i[1] \cdot w_i[2..\ell]) = \spe(w_i[1..\ell]).
  \end{eqnarray*}
  Thus, we have proved the first statement.

  By a similar argument,
  we also have $\spe(a \cdot \spe(w_j[1..\ell])) = \spe(w_i[1..\ell+1])$.
  Thus, if $\spe(w_i)[1..\ell+1]$ is represented in $\PPH(\inputtrie)^{i-1}$,
  the node representing $\spe(w_j[1..\ell])$ must have a reversed suffix link labeled with $a$.
  This contradicts the fact that $\pnode{k}$ is the lowest ancestor of $\pnode{j}$ 
  which has a reversed suffix link labeled with $a$.
\end{proof}

\begin{lemma} \label{lem:insertion_2}
  Assume that $w_i[1] \in \Pi$ does not appear in $w_j[1..|\pnode{j}|]$.
  Let $\pnode{k}$ be the node of $\PPH(\inputtrie)^{i-1}$ which is the lowest ancestor of $\pnode{j}$ 
  that has a reversed suffix link labeled with $a \in \Pi \setminus \{\pnode{j}[\alpha] \mid 1 \leq \alpha \leq |\pnode{j}|\}$.
  Then, $\pnode{i}$ is a child of the node representing $\rslink(a, \pnode{k})$.
\end{lemma}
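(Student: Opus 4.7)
The proof plan closely mirrors that of Lemma~\ref{lem:insertion_1}, with the main difference being how we certify the ``freshness'' of the character $a$. Let $\ell = |\rslink(a, \pnode{k})|$, so that $\pnode{k}$ — as an ancestor of $\pnode{j}$ in $\PPH(\inputtrie)^{i-1}$ — represents $\spe(w_j)[1..\ell-1]$. My goal is to establish (i) $\rslink(a, \pnode{k}) = \spe(w_i)[1..\ell]$ and (ii) $\spe(w_i)[1..\ell+1]$ is not represented in $\PPH(\inputtrie)^{i-1}$; together these force $\pnode{i}$ to be inserted as a child of $\rslink(a, \pnode{k})$.

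For (i), the critical observation is that both $a$ and $w_i[1]$ are parameter characters that are fresh in the strings to which they are prepended. Since $\pnode{k}$ is a prefix of $\pnode{j}$ viewed as strings, the hypothesis $a \in \Pi \setminus \{\pnode{j}[\alpha] \mid 1 \leq \alpha \leq |\pnode{j}|\}$ implies that $a$ does not occur in $\spe(w_j)[1..\ell-1]$; symmetrically, the hypothesis $w_i[1] \notin w_j[1..|\pnode{j}|]$ implies $w_i[1] \notin w_j[1..\ell-1]$, because $\ell - 1 = |\pnode{k}| \leq |\pnode{j}|$. The transposition swapping $a$ and $w_i[1]$ and fixing every other character is then a bijection witnessing that $a \cdot \spe(w_j)[1..\ell-1]$ and $w_i[1] \cdot w_j[1..\ell-1]$ p-match, so they share a common $\spe$. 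Finally, using $w_j = w_i[2..|w_i|]$ (which follows from the edge between $\cnode{i}$ and $\cnode{j}$ in $\pCST$), this common value equals $\spe(w_i[1..\ell]) = \spe(w_i)[1..\ell]$, proving (i).

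For (ii), I would argue by contradiction: if $\spe(w_i)[1..\ell+1]$ were represented, then extending the computation above by one character yields $\spe(w_i)[1..\ell+1] = \spe(a \cdot \spe(w_j)[1..\ell])$. Hence the ancestor of $\pnode{j}$ of depth $\ell$ (which exists in the main case $\ell \leq |\pnode{j}|$ and represents $\spe(w_j)[1..\ell]$) must carry a reversed suffix link labeled $a$. But that ancestor lies strictly below $\pnode{k}$ on the path from the root to $\pnode{j}$, contradicting the minimality of $\pnode{k}$.

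The main obstacle is the freshness bookkeeping in (i) — in particular, verifying that prepending a fresh parameter character commutes cleanly with the $\spe$ normalization, so that the reversed suffix link target is well-defined independently of which fresh $a$ one picks (any two such choices lie in the same p-equivalence class). A smaller but genuine subtlety is the boundary case $\ell = |\pnode{j}|+1$ (that is, $\pnode{k} = \pnode{j}$), where the node invoked in (ii) is not guaranteed to be an ancestor of $\pnode{j}$ in $\PPH(\inputtrie)^{i-1}$; here one must argue separately from the non-decreasing length order on $w_1,\ldots,w_{N_p}$ that no earlier insertion could have produced a node representing $\spe(w_i)[1..\ell+1]$, closing the contradiction.
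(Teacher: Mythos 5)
Your proposal follows essentially the same route as the paper's proof: your statement (i) is exactly the paper's chain $\rslink(a, \pnode{k}) = \spe(a \cdot \spe(w_j[1..\ell-1])) = \spe(a \cdot w_j[1..\ell-1]) = \spe(w_i[1..\ell])$, justified by the freshness of $a$ and of $w_i[1]$, and your statement (ii) is the same contradiction with the minimality of $\pnode{k}$. You are in fact somewhat more careful than the paper, which states the $\spe$ identities without spelling out the bijection and does not address the boundary case $\pnode{k} = \pnode{j}$ that you flag.
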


\begin{proof}
  Let $\ell$ be the length of $\rslink(a, \pnode{k})$.
  We show similar statements to the proof of the previous lemma hold,
  but with different assumptions on $w_i[1]$ and $a$.
  By the definition of reversed suffix links and $\spe$, we have
  \[
   \rslink(a, \pnode{k}) = \spe(a \cdot \spe(w_j[1..\ell-1])) = \spe(a \cdot w_j[1..\ell-1]) = \spe(w_i[1..\ell]).
  \]
  Thus, we have proved the first statement.

  By a similar argument,
  we also have $\spe(a \cdot \spe(w_j[1..\ell])) = \spe(w_i[1..\ell+1])$.
  This implies that the second statement holds (similar to the proof of the previous lemma).
\end{proof}

\begin{lemma} \label{lem:insertion_3}
  Assume that $w_i[1] \in \Sigma$.
  Let $\pnode{k}$ be the node in $\PPH(\inputtrie)^{i-1}$ 
  which is the lowest ancestor of $\pnode{j}$ 
  that has a reversed suffix link labeled with $w_i[1]$.
  Then, $\pnode{i}$ is a child of the node representing $\rslink(w_i[1], \pnode{k})$.
\end{lemma}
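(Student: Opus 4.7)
The plan is to mirror the proofs of Lemmas~\ref{lem:insertion_1} and \ref{lem:insertion_2}. Let $\ell$ denote the length of $\rslink(w_i[1], \pnode{k})$. I will establish two claims that together pin down the insertion point: (i) $\rslink(w_i[1], \pnode{k}) = \spe(w_i)[1..\ell]$, so the endpoint of the reversed suffix link is exactly the longest prefix of $\spe(w_i)$ that is currently represented; and (ii) $\spe(w_i)[1..\ell+1]$ is not represented in $\PPH(\inputtrie)^{i-1}$, so no longer prefix is present. By the recursive definition of the sequence hash tree, these two facts force the new node $\pnode{i}$ to be attached as a child of the node representing $\rslink(w_i[1], \pnode{k})$.

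For claim (i), the key simplification over the previous two cases is that $w_i[1] \in \Sigma$: because $\spe$ fixes every static character, prepending $w_i[1]$ requires no renaming of the new leading character. Since $\pnode{k}$ is an ancestor of $\pnode{j}$ at depth $\ell-1$, it represents $\spe(w_j)[1..\ell-1] = \spe(w_j[1..\ell-1])$, using the standard fact that $\spe$ commutes with taking a prefix (which follows from the first-occurrence definition of $\spe$). Because $\cnode{i}$ is a child of $\cnode{j}$ in $\pCST$, we have $w_i[2..\ell] = w_j[1..\ell-1]$. Unfolding the definition of $\rslink$ then yields
\[
\rslink(w_i[1], \pnode{k}) = \spe\bigl(w_i[1] \cdot \spe(w_j[1..\ell-1])\bigr) = \spe(w_i[1] \cdot w_j[1..\ell-1]) = \spe(w_i[1..\ell]).
\]

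For claim (ii), I argue by contradiction. Suppose $\spe(w_i)[1..\ell+1]$ is already represented in $\PPH(\inputtrie)^{i-1}$. Running the same computation one level deeper gives $\spe(w_i[1] \cdot \spe(w_j[1..\ell])) = \spe(w_i[1..\ell+1])$, so the node representing $\spe(w_j[1..\ell])$ (the ancestor of $\pnode{j}$ strictly below $\pnode{k}$ on the path to $\pnode{j}$) would carry a reversed suffix link labeled $w_i[1]$. This contradicts $\pnode{k}$ being the \emph{lowest} ancestor of $\pnode{j}$ with a reversed suffix link labeled $w_i[1]$.

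I do not foresee a serious obstacle; this case is in fact the easiest of the three because $w_i[1]$ being static eliminates the bookkeeping of which p-character in $\Pi$ plays the role of the renamed leading character. The only subtle point worth double-checking is the identity $\spe(w)[1..k] = \spe(w[1..k])$, which is immediate from the first-occurrence definition of $\spe$ once one observes that restricting to a prefix cannot disturb the order in which p-characters first appear.
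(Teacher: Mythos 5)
Your proof is correct and follows exactly the route the paper intends: the paper's own proof of this lemma is a one-line deferral ("similar to the above proofs"), and your two claims (i) and (ii) reproduce the structure of the proofs of Lemmas~\ref{lem:insertion_1} and~\ref{lem:insertion_2}, with the correct observation that $w_i[1]\in\Sigma$ is fixed by $\spe$ so no renaming bookkeeping is needed. No substantive difference from the paper's argument.
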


\begin{proof}  
  Since $w_i[1] \in \Sigma$, we can show the lemma in a similar way to the above proofs.
\end{proof}

\subsubsection{Inserting new reversed suffix links}

In our algorithm, we will add reversed suffix links which point to $h_i$ after inserting a new node $h_i$.
The following lemma shows the number of nodes which point to $h_i$ by reversed suffix links is at most one.

\begin{lemma} \label{lem:rsl-from-one}
  For any node $v$ of $\PPH(\inputtrie)$,
  the number of nodes which point to $v$ by reversed suffix links is at most one.
\end{lemma}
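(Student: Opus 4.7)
The plan is to argue by contradiction. Suppose two distinct nodes $u_1 \neq u_2$ in $\PPH(\inputtrie)$ both have reversed suffix links pointing to $v$, with labels $a_1, a_2 \in \Sigma \cup \Pi$, respectively. Identifying each node with the p-string it represents, Definition~\ref{def:reversed_suffix_link} gives $\spe(a_1 u_1) = v = \spe(a_2 u_2)$, and in particular $a_1 u_1 \approx a_2 u_2$.

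The key auxiliary fact I would establish first is that every node of $\PPH(\inputtrie)$ represents a p-string that is already in $\spe$-form, that is, $\spe(u) = u$ for the string $u$ at every node. This follows immediately from the construction: nodes of $\PPH(\inputtrie) = \SHT(\pstrset)$ are prefixes of the p-strings $\spe(w_i)$, and any prefix of an $\spe$-form p-string is itself in $\spe$-form, because the first-appearance renaming of the parameter characters in the prefix agrees with that in the whole string.

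The next step is to lift $a_1 u_1 \approx a_2 u_2$ to $u_1 \approx u_2$. By definition of p-matching, there exists a bijection $f$ on $\Sigma \cup \Pi$ that is identity on $\Sigma$ with $f(a_1) = a_2$ and $f(u_1[i]) = u_2[i]$ for every $1 \leq i \leq |u_1|$ (note that the lengths are equal since the target strings agree). The same $f$, without any modification, witnesses $u_1 \approx u_2$.

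Combining the two steps, since $u_1$ and $u_2$ are both the canonical representatives of their p-matching equivalence class, $u_1 = \spe(u_1) = \spe(u_2) = u_2$, contradicting the assumption $u_1 \neq u_2$. The only non-routine point in this plan is the prefix-closure of $\spe$-form strings, but this is immediate from the definition of $\spe$; everything else is a direct unwinding of definitions.
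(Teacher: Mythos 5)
Your proposal is correct and follows essentially the same argument as the paper's proof: assume two distinct nodes point to $v$, derive $a_1 u_1 \approx a_2 u_2$ from the definition of reversed suffix links, conclude $u_1 \approx u_2$, and use the fact that every node of $\PPH(\inputtrie)$ is its own $\spe$-representative to force $u_1 = u_2$. The only difference is that you explicitly justify the two steps the paper leaves implicit (the restriction of the renaming bijection, and the prefix-closure of $\spe$-form strings), which is a welcome addition but not a different approach.
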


\begin{proof}
  Let $v_1, v_2$ be nodes of $\PPH(\inputtrie)$.
  Assume that $\rslink(a_1, v_1) = \rslink(a_2, v_2)$ for some $a_1, a_2 \in \Sigma \cup \Pi$ 
  and $v_1 \neq v_2$ hold.
  By the definition of reversed suffix links,
  $\spe(a_1 \cdot v_1) = \spe(a_2 \cdot v_2)$.
  Namely, $a_1 \cdot v_1 \approx a_2 \cdot v_2$ holds.
  This implies that $v_1 \approx v_2$, i.e., $\spe(v_1) = \spe(v_2)$.
  Since $v_1$ and $v_2$ are node of $\PPH(\inputtrie)$,
  $\spe(v_1) = v_1$ and $\spe(v_2) = v_2$ hold.
  This contradicts the fact that $v_1 \neq v_2$.
\end{proof}
By the above lemma and arguments of insertion,
the node which points to the new node $\pnode{i}$ by reversed suffix links is 
only a child of $\pnode{k}$ which is an ancestor of $\pnode{j}$.

\subsubsection{Construction algorithm}

Finally, we explain our algorithm of constructing our position heap.
From the above lemmas, we can use similar techniques to Nakashima et al.~\cite{position_heaps_of_trie_2012}
which construct the position heap of a trie of normal strings.
One main difference is the computation of the label of inserted edges/reversed suffix links.
In so doing, each node $h_{\alpha}$ holds the resulting substitutions 
from $w_{\alpha}[1..|h_{\alpha}|]$ to $\spe(w_{\alpha}[1..|h_{\alpha}|])$.
By using these substitutions, we can compute the corresponding label in $O(\pi)$ time.
Thus, we can insert new nodes and new suffix links in $O(\pi)$ time for each node of $\pCST$.
In fact, since we need to use $(\sigma + \pi)$-copies of the position heap for nearest marked ancestor queries on each character, we use $O(\sigma + \pi)$ time to update the data structures needed for each node of $\pCST$.
Therefore, we have the following lemma.

\begin{lemma} \label{lem:comp_pph}
  We can compute $\PPH(\inputtrie)$ from $\pCST$ of size $N_p$
  in $O(N_p(\sigma + \pi))$ time and space.
\end{lemma}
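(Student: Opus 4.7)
The plan is to construct $\PPH(\inputtrie)$ incrementally by processing the nodes $\cnode{1}, \cnode{2}, \ldots, \cnode{N_p}$ of $\pCST$ in breadth-first order, maintaining $\PPH(\inputtrie)^{i}$ together with all its reversed suffix links as invariants. Since the BFS order respects $|w_1| \le |w_2| \le \cdots \le |w_{N_p}|$, the parent $\cnode{j}$ of each $\cnode{i}$ in $\pCST$ satisfies $j < i$, so $\pnode{j}$ is already present in $\PPH(\inputtrie)^{i-1}$ when we insert $\pnode{i}$; this is exactly the situation covered by Lemmas~\ref{lem:insertion_1}--\ref{lem:insertion_3}.

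The first ingredient is a substitution table of size $O(\pi)$ attached to every node $\pnode{\alpha}$, encoding the renaming from $w_\alpha[1..|\pnode{\alpha}|]$ to $\spe(w_\alpha[1..|\pnode{\alpha}|])$. When $\cnode{i}$ is processed, I use $w_i[1]$ and the table at $\pnode{j}$ to decide in $O(\pi)$ time which of the three cases of Lemmas~\ref{lem:insertion_1}--\ref{lem:insertion_3} applies and to compute the label $a$ of the reversed suffix link that must be followed: in Case~1 $a$ is read off from $\pnode{j}$'s table, in Case~2 it is any p-character missing from that table, and in Case~3 it is simply $w_i[1]$. The table at $\pnode{i}$ is then obtained by copying $\pnode{j}$'s table and extending it with the renaming dictated by the new leading character, again in $O(\pi)$ time.

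The second ingredient is, for each character $c \in \Sigma \cup \Pi$, a copy of $\PPH(\inputtrie)^i$ equipped with a nearest-marked-ancestor (NMA) structure whose marked nodes are precisely those currently carrying an outgoing reversed suffix link labeled $c$. Using an online NMA data structure supporting amortized-constant inserts and queries, the ancestor $\pnode{k}$ required by Lemmas~\ref{lem:insertion_1}--\ref{lem:insertion_3} is located with a single query on the $a$-th copy, after which $\pnode{i}$ is attached as a child of $\rslink(a, \pnode{k})$. Each new node is inserted into all $\sigma+\pi$ copies at a total cost of $O(\sigma+\pi)$, and by Lemma~\ref{lem:rsl-from-one} at most one new reversed suffix link points into $\pnode{i}$, so the mark updates contribute only $O(1)$ per insertion. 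After the skeleton is built, the maximal reach pointers are computed by performing, for each $\pnode{i}$, a downward walk from the root guided by $\pnode{i}$'s substitution table; the costs are absorbed by the same $O(\sigma+\pi)$-per-node budget.

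Summing the $O(\sigma+\pi)$ per-node cost over the $N_p$ nodes of $\pCST$ gives the claimed $O(N_p(\sigma+\pi))$ time and space bound. The main obstacle is justifying that the amortized analysis of the incremental NMA structures transports from the ordinary-string setting of Nakashima et al.~\cite{position_heaps_of_trie_2012} to the parameterized setting; this works because Lemma~\ref{lem:rsl-from-one} guarantees that reversed suffix links are inserted at most once per target node, so the update pattern on each of the $\sigma+\pi$ NMA copies matches the pattern handled by the classical analysis.
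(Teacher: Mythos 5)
Your proposal is correct and follows essentially the same route as the paper: incremental insertion in length (BFS) order using Lemmas~\ref{lem:insertion_1}--\ref{lem:insertion_3}, per-node substitution tables of size $O(\pi)$ to compute edge and reversed-suffix-link labels, $\sigma+\pi$ copies of the heap with nearest-marked-ancestor structures to locate $\pnode{k}$, and Lemma~\ref{lem:rsl-from-one} to bound the suffix-link updates. The only difference is that you fold the maximal-reach-pointer computation into this proof, whereas the paper defers it to a separate lemma; this does not affect correctness.
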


Therefore, we can obtain the following result by Lemmas~\ref{lem:comp_pCST} and~\ref{lem:comp_pph}.

\begin{theorem}
  We can compute $\PPH(\inputtrie)$ of a given common suffix trie $\inputtrie$ of size $N$ 
  in $O(N(\sigma + \pi))$ time and space.
\end{theorem}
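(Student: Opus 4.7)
The plan is to derive this theorem as a direct pipeline of the two preceding lemmas, so there is very little new work beyond bookkeeping. First I would invoke Lemma~\ref{lem:comp_pCST} on the input common suffix trie $\inputtrie$ of size $N$ to build $\pCST$, together with the per-node substitution tables from each $x_j$ to $\spe((x_j)^R)^R$ and the pointers between nodes of $\inputtrie$ and $\pCST$, in $O(N\pi)$ time and space. Let $N_p$ denote the resulting size of $\pCST$; by its definition as a merging of p-equivalent nodes, $N_p \le N$.

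Next I would hand $\pCST$ to the construction procedure from Lemma~\ref{lem:comp_pph}, which builds $\PPH(\inputtrie)$ in $O(N_p(\sigma+\pi))$ time and space. Here it is important that the first phase has already supplied each node of $\pCST$ with the substitution table that the second phase relies on to compute, in $O(\pi)$ time per insertion, the label of each new edge and reversed suffix link; this is exactly the interface between the two lemmas.

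Summing the two costs gives $O(N\pi) + O(N_p(\sigma+\pi))$, and applying $N_p \le N$ collapses this to $O(N(\sigma+\pi))$ time and working space, as claimed.

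The only thing that might require a sanity check is that the working space of the first phase does not dominate: $O(N\pi)$ is absorbed by $O(N(\sigma+\pi))$, so this is immediate. No non-trivial obstacle remains, since all algorithmic content has already been localized in Lemmas~\ref{lem:comp_pCST} and~\ref{lem:comp_pph}; the theorem is essentially a corollary of their composition.
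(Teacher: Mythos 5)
Your proposal is correct and follows exactly the paper's own route: the paper derives this theorem directly as a composition of Lemma~\ref{lem:comp_pCST} ($O(N\pi)$ construction of $\pCST$) and Lemma~\ref{lem:comp_pph} ($O(N_p(\sigma+\pi))$ construction of $\PPH(\inputtrie)$ from $\pCST$), using $N_p \leq N$ to combine the bounds. You have merely spelled out the bookkeeping (including the hand-off of the substitution tables) that the paper leaves implicit.
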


Since we can also compute all maximal reach pointers of $\PPH(\inputtrie)$ efficiently 
in a similar way to~\cite{position_heaps_of_trie_2012}
(this algorithm is also similar to suffix link construction),
we also have the following lemma.

\begin{lemma}
  We can compute all the maximal reach pointers for $\PPH(\inputtrie)$ 
  in $O(N_p(\sigma + \pi))$ time and space.
\end{lemma}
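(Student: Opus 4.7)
The plan is to adapt the maximal-reach-pointer construction of Nakashima et al.~\cite{position_heaps_of_trie_2012} from the classical CS-trie position heap to our parameterized setting, re-using the reversed suffix links already available on $\PPH(\inputtrie)$ from the construction of Lemma~\ref{lem:comp_pph}. The algorithm parallels the construction step of that lemma: traverse $\pCST$ top-down and exploit an amortized potential argument against the $\PPH(\inputtrie)$-depth of the current maximal-reach node.

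Concretely, I would initialize $\mrp(i) \leftarrow \pnode{i}$ for every $i$ and then process $\pCST$ top-down (BFS or DFS). Consider a non-root node $\cnode{i}$ whose parent in $\pCST$ is $\cnode{j}$, so that $w_i = a \cdot w_j$ for some character $a$, and assume inductively that $\mrp(j)$ has already been computed. I would then compute $\mrp(i)$ by: (i) using the substitution map already stored at $\cnode{i}$ from the $\pCST$ construction of Lemma~\ref{lem:comp_pCST} to determine, in $O(\pi)$ time, the character $a'$ that $a$ is renamed to in $\spe(w_i)$; (ii) finding the lowest ancestor $v$ of $\mrp(j)$ in $\PPH(\inputtrie)$ that has a reversed suffix link labeled $a'$ (exactly as in Lemmas~\ref{lem:insertion_1}--\ref{lem:insertion_3}), via a nearest-marked-ancestor query on the $a'$-copy of $\PPH(\inputtrie)$; and (iii) starting at $\rslink(a', v)$ and greedily walking down in $\PPH(\inputtrie)$ as long as the next edge matches the appropriately renamed next character of $\spe(w_i)$, recording the deepest node reached as $\mrp(i)$.

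Correctness follows from the same algebraic identity $\spe(a \cdot \spe(w_j[1..\ell])) = \spe(w_i[1..\ell+1])$ already used in the insertion lemmas: each downward step corresponds exactly to extending the longest represented prefix of $\spe(w_i)$ by one character, so the walk stops precisely at the deepest node representing a prefix of $\spe(w_i)$. For the time analysis, I would mirror the suffix-link construction of Nakashima et al., maintaining a potential equal to the current $\PPH(\inputtrie)$-depth: each transition from parent to child in $\pCST$ costs $O(1)$ potential via the single reversed-suffix-link jump, while every downward step increases the potential by one. Since depths are bounded by $N_p$ and $\pCST$ has $N_p$ nodes, the total number of downward steps across the whole traversal is $O(N_p)$; each step, each suffix-link jump, and each nearest-marked-ancestor query costs $O(\sigma+\pi)$ thanks to the $(\sigma+\pi)$-copy data structure already maintained in Lemma~\ref{lem:comp_pph}, giving the claimed $O(N_p(\sigma+\pi))$ total time and space.

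The main obstacle I expect is the parameterized bookkeeping: the character $a'$ driving the reversed suffix link depends on how $a$ is renamed inside $\spe(w_i[1..\ell+1])$, which is different in general from the renaming used inside $\spe(w_j)$, so the substitution maps at $\cnode{i}$ and $\cnode{j}$ must be related explicitly and updated coherently along the traversal. Fortunately, the substitution at $\cnode{i}$ is obtained by a single extra character comparison on top of that of $\cnode{j}$, exactly as in the proof of Lemma~\ref{lem:comp_pCST}, so the $O(\pi)$-per-node renaming cost is absorbed into the stated bound; once this bookkeeping is in place, the amortized analysis reduces almost verbatim to the non-parameterized case.
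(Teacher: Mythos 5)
Your proposal is correct and follows essentially the same route as the paper, which gives no detailed proof here but simply invokes the maximal-reach construction of Nakashima et al.\ together with the reversed-suffix-link and nearest-marked-ancestor machinery of Lemma~\ref{lem:comp_pph} and the renaming bookkeeping of Lemma~\ref{lem:comp_pCST}, exactly as you do. The only cosmetic difference is your step~(iii): since $\mathrm{depth}(\mrp(i)) \leq \mathrm{depth}(\mrp(j)) + 1$ (by suffix-closure of the represented prefixes), the reversed-suffix-link target of the lowest marked ancestor of $\mrp(j)$ is already $\mrp(i)$ itself, so the greedy downward walk takes zero steps and your potential argument, whose telescoping over a tree rather than a path is in any case a bit loose, is not actually needed.
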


Hence, we can get the following result.

\begin{theorem}
  We can compute the augmented $\PPH(\inputtrie)$ of a given common suffix trie $\inputtrie$ of size $N$ 
  in $O(N(\sigma + \pi))$ time and space.
\end{theorem}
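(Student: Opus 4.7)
The plan is to chain together the three preceding constructive lemmas and absorb the smaller bound into the larger one using the inequality $N \geq N_p$. First, I would invoke Lemma~\ref{lem:comp_pCST} to compute $\pCST$ from the input common suffix trie $\inputtrie$; this pass also has the side effect of producing, at each node of $\inputtrie$, a pointer to its corresponding node of $\pCST$, which is necessary later for translating p-matches on $\pCST$ back to nodes of $\inputtrie$. This step costs $O(N \pi)$ time and space, which is within the claimed budget $O(N(\sigma + \pi))$.

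Second, I would apply Lemma~\ref{lem:comp_pph} to build $\PPH(\inputtrie)$ from $\pCST$ in $O(N_p(\sigma + \pi))$ time and space. Third, I would invoke the last stated lemma to augment $\PPH(\inputtrie)$ with all maximal reach pointers, again in $O(N_p(\sigma + \pi))$ time and space. Summing these contributions gives a total of $O(N\pi) + O(N_p(\sigma+\pi)) = O(N(\sigma+\pi))$ once we use the inequality $N_p \leq N$ already noted in Section~3.1 to bound the second and third terms.

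There is essentially no mathematical obstacle here, since the work has been done in the lemmas; the only care needed is to verify that the two construction phases compose correctly, i.e., that the data structure produced by the first phase is in the exact form assumed as input by the second phase. In particular, I would remark that $\pCST$ as constructed in Lemma~\ref{lem:comp_pCST} already stores, at each node $\cnode{i}$, the mapping from $w_i$ to $\spe(w_i^R)^R$ needed by the $O(\pi)$-time label computation referenced in the proof of Lemma~\ref{lem:comp_pph}, so no additional preprocessing is required between the phases. With these observations the theorem follows immediately from Lemmas~\ref{lem:comp_pCST} and~\ref{lem:comp_pph} together with the maximal reach pointer lemma.
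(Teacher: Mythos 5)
Your proposal is correct and matches the paper's (implicit) argument exactly: the theorem is obtained by chaining Lemma~\ref{lem:comp_pCST}, Lemma~\ref{lem:comp_pph}, and the maximal-reach-pointer lemma, bounding $N_p \leq N$. Your extra remark that the substitution tables produced while building $\pCST$ are exactly what the second phase consumes is a sensible interface check that the paper leaves unstated.
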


\section{Conclusions and open problems}

This paper proposed the p-position heap for a CS trie $\inputtrie$,
denoted $\PPH(\inputtrie)$,
which is the first indexing structure for the p-matching problem on a trie.
The key idea is to transform the input CS trie $\inputtrie$ into
a parameterized CS trie $\pCST$ where p-matching suffixes are merged.
We showed that the p-matching problem on the CS trie $\inputtrie$
can be reduced to the p-matching problem on the parameterized CS trie $\pCST$.
We proposed an algorithm which constructs $\PPH(\inputtrie)$
in $O(N (\sigma + \pi))$ time and working space,
where $N$ is the size of the CS trie $\inputtrie$.
We also showed that using $\PPH(\mathcal{P})$
one can solve the p-matching problem on the CS trie $\inputtrie$
in $O(m \log (\sigma + \pi) + m \pi + \pocc)$ time,
where $m$ is the length of a query pattern and $\pocc$ is the number of occurrences to report.

Examples of open problems regarding this work are the following:
\begin{itemize}
\item Would it be possible to shave
  the $m\pi$ term in the pattern matching time using p-position heaps?
  This $m\pi$ term is introduced when the depth of the corresponding path of 
  $\PPH(\inputtrie)$ is shorter the pattern length $m$
  and thus the pattern needs to be partitioned into $O(\pi)$ blocks
  in the current pattern matching algorithm~\cite{DiptaramaKONS17}.

\item Can we efficiently build the p-suffix tree for a CS trie?
  It is noted by Baker~\cite{Baker93,Baker96} that the \emph{destination} of
  a parameterized suffix link (p-suffix link) of the p-suffix tree can be
  an \emph{implicit node}
  that lies on an edge, and hence there is no monotonicity in the chain of p-suffix links.
  If we follow the approach by Breslauer~\cite{breslauer_suffix_tree_tree_1998}
  which is based on Weiner's algorithm~\cite{Weiner},
  then we need to use the reversed p-suffix link.
  It is, however, unclear whether one can adopt this approach
  since the \emph{origin} of a reversed p-suffix link may be an implicit node.
  Recall that in each step of construction
  we need to find the nearest (implicit) ancestor
  that has a reversed p-suffix link labeled with a given character.
  Since there can be $\Theta(N^2)$ implicit nodes,
  we cannot afford to explicitly maintain information about the reversed p-suffix links
  for all implicit nodes.
\end{itemize}

\bibliographystyle{abbrv}
\bibliography{ref}

\end{document}